\crefname{hypothesis}{Hypothesis}{Hypotheses}
\title{Utility-based acceptability indices}
 \author{
Marcin Pitera\thanks{Institute of Mathematics, Jagiellonian University, Krakow, Poland; research supported by NCN grant 2020/37/B/ST1/00463.}
  \and
Mikl\'os R\'asonyi\thanks{Alfr\'ed R\'enyi Institute of Mathematics and E\"otv\"os Lor\'and{}
University, Budapest, Hungary; research supported by the National Research, Development and Innovation Office (NKFIH) grant K 143529.}}
\DeclareMathOperator{\E}{\mathbb{E}}  
\def\cF{\mathcal{F}}
\def\bP{\mathbb{P}}
\def\bQ{\mathbb{Q}}
\def\bE{\mathbb{E}}
\def\bR{\mathbb{R}}
\def\bT{\mathbb{T}}
\def\bN{\mathbb{N}}
\DeclareMathOperator*{\esssup}{ess\,sup} 
\DeclareMathOperator*{\essinf}{ess\,inf} 
\def\namedlabel#1#2{\begingroup
    #2%
    \def\@currentlabel{#2}%
    \phantomsection\label{#1}\endgroup
}
\begin{document}

\maketitle
\begin{abstract}
In this short paper we introduce a new class of performance measures based on certainty equivalents defined via scaled utility functions. We analyse their properties, show that the corresponding portfolio optimization problem is well-posed under generic conditions, and analyse the link between portfolio dynamics, benchmark process, and utility function choice in the long-run setting. 
\end{abstract}
\begin{keywords}
acceptability index, measure of performance, entropic utility, certainty equivalent, scaled utility, mean-variance optimisation, RAROC, risk sensitive criterion, risk per unit of time.
\end{keywords}
\begin{AMS}
91B16, 91B70, 91G10, 91G70
\end{AMS}

\section{Introduction}\label{intro}

Accurate measurement of financial position performance and relative benchmarking is one of the cornerstones of the modern portfolio optimization theory and related risk-averse control problems, see \cite{Pri2007,KolTutFab2014} and references therein. The normative goal of performance measurement is to quantify the balance between risk and reward using a single number that establish preference order. Among the most recognised measures of financial performance are the risk-to-reward ratios, with {\it Sharpe Ratio} and {\it Gain-Loss Ratio} being the most common choices, see \cite{Sha1964,OrtBigStoRachFab2005,CheKro2013} for economic background and multiple alternatives. 

In the seminal work \cite{CheMad2009}, the authors propose a set of normative axioms a map should satisfy to be a good indicator of financial performance; this corresponds to properties such as monotonicity and scale invariance. Moreover, the authors of \cite{CheMad2009} provide a dual link between increasing families of coherent risk measures and the family of performance measures, called {\it acceptability indices}, and illustrate how they are interconnected to various financial problems including arbitrage theory and good-deal trade setup. The work presented in \cite{CheMad2009} was followed by multiple other papers which studied {\it acceptability indices} in various context linked to option pricing, good-deal bounds, conic finance, or acceptability maximisation, see \cite{KouRos2020,CheMad2010,MadSch2017,KovRudCia2022,EbeMad2014} and references therein.

The notion of acceptability indices have been also generalised in multiple directions, e.g. by considering a dynamic (conditional) setup, see \cite{BieCiaZha2012,BieCiaIyiRod2013}, or modifying (relaxing) the set of underlying assumptions, see \cite{RosGiaSga2013,Righi2022}. The dual link between risk measures and acceptability indices has been also used to show that the commonly used backtesting metrics, such as the standard exception rate (breach) count for Value-at-Risk, could be also expressed in the language of acceptability indices, not necessarily in the coherent setup, see \cite{MolPit2019,PitSch2022} for details.

In the present paper we expand the theory of acceptability indices and show how one can link the concept introduced in \cite{CheMad2009} with expected utility portfolio optimisation, stochastic control risk-averse objective functions, and generic theory of certainty equivalents, see \cite{BiePli2003,BenTeb1986,FolSch2004} for more background. To this end, we define a novel class of utility-based acceptability indexes which are different in nature from acceptability indices introduced in the literature so far. First, these new objects are defined starting from utility functions, which are a cornerstone concept of mathematical economics. Second, the maps introduced in this paper satisfy a certain \emph{positive inverse homogeneity} property which is structurally different from the \emph{scale invariance} property considered in \cite{CheMad2009}. This apparent small difference has far-reaching consequences. In particular, we show that inverse homogeneous maps are more fit for purpose to model the growth dynamics and quantify the underlying risk aversion. In fact, the starting point of our definition was the family of entropic risk measures indexed by risk-aversion, which is a prominent example of non-coherent (yet convex) risk measures; this family is commonly used in stochastic control, see~\cite{KupSch2009,BiePli1999,PitSte2023} and references therein. 

We want to emphasize that this short research paper is just a first step in analysing the potential of the concept of \emph{positive inverse homogeneity} 
for certainty equivalents. Here, we focus on axiomatic properties, proper formulation of the portfolio optimisation probem, and generic performance assessment. We hope that adequate performance measure axiom modifications combined with process-growth consideration could potentially consolidate the (scale-invariant) acceptability index theory with modern portfolio theory, e.g. where mean-variance optimisation is considered. This also applies to multiple risk-sensitive stochastic control problem formulations, e.g. when risk-sensitive criterion is considered as the extension of the Kelly growth index. See \cite{KalErtAkb2019, MacThoZie2011} and references therein.

The paper is organised as follows: Section \ref{S:preliminaries} defines and analyses, starting from a utility function $U$, a parametrized family of certainty equivalents which are the key building blocks for utility-based acceptability indexes. The core object of this paper, utility-based acceptability index, is formally introduced and studied in Section \ref{section3}. In Section~\ref{S:final.section} we focus on two problems that emerge when studying utility-based acceptability indices. Namely, in Subsection \ref{maxi} we show that the finite-horizon portfolio optimization problem is well-posed under mild assumptions. On the other hand, in Subsection \ref{S:long-run2} we investigate the case of infinite horizon and analyse the dependency between the choice of certainty equivalent, the underlying portfolio dynamics, and performance measure value. 

\section{Preliminaries}\label{S:preliminaries}

Let $(\Omega,\mathcal{F},\mathbb{P})$ be a probability space, let $L^{\infty}:=L^{\infty}(\Omega,\mathcal{F},\mathbb{P})$ denote the set of all essentially bounded and real-valued random variables understood as discounted P\&Ls, future wealth of financial positions or log-returns of wealth. For brevity, we also set $L^1:=L^{1}(\Omega,\mathcal{F},\mathbb{P})$ and  $L^0:=L^0(\Omega,\cF,\bP)$ to denote the set of integrable and all random variables, respectively, and use $\bE[\cdot]$ for the expectation operator.

Our goal is to quantify the performance of a generic position $X\in L^\infty$ following the certainty equivalent approach with additional risk aversion specification. To this end, we consider a concave, strictly increasing, and bounded from above {\it utility function} $U\in C^2(\bR)$; {for brevity, we use notation $U(\infty):=\lim_{x\to\infty}U(x)$}. We define the scaled utility family $(U_{\gamma})_{\gamma>0}$, where $U_{\gamma}(x):=U(\gamma \cdot x)$ for $\gamma>0$ and $x\in\bR$. Note that $U_{\gamma}\in C^2(\bR)$, and $U_{\gamma}$ is concave, strictly increasing, and bounded from above for any $\gamma>0$. Next, we define the family of {\it risk measures} $(\mu_{\gamma})_{\gamma > 0}$, where $\mu_{\gamma}\colon L^0 \to \bR\cup \{+\infty\}$ is given by
\begin{equation}\label{eq:cert.equiv}
\mu_{\gamma}(X) := -U_{\gamma}^{-1}(\bE[U_{\gamma}(X)])=-\tfrac{1}{\gamma}U^{-1}(\bE[U(\gamma X)]);
\end{equation}
In \eqref{eq:cert.equiv}, we use convention $U^{-1}(-\infty):=-\infty$ so that $\mu_{\gamma}(X)$ is well-defined on $L^0$ and takes values in  $(-\infty,\infty]$. For simplicity, from now on, we decided to effectively act on $L^\infty$ but most definitions and properties introduced herein could be restated in $L^0$.

We use parameter $\gamma>0$ to quantify the {\it risk aversion} using the convention that the bigger the value of $\gamma$, the more risk-averse we are. To this end, we say that the utility function $U$ is {\it scale aversion regular}, if for any $X\in L^{\infty}$ and  $0<\gamma_{1}\leq \gamma_{2}<\infty$ we get
\begin{equation}\label{eq:regular}
\mu_{\gamma_{1}}(X)\leq \mu_{\gamma_{2}} (X),
\end{equation}
that is $(u_{\gamma}(\cdot))_{\gamma>0}$ is a non-decreasing function of $\gamma$. Before we introduce the main object of study of this paper, i.e. the acceptability index linked to the family $(\mu_{\gamma})_{\gamma > 0}$, we need to recall the concept of {\it Arrow-Pratt measure of absolute risk aversion}, study how it is related to scale aversion regularity of $U$, and study same basic properties of the family $(\mu_{\gamma})_{\gamma > 0}$. For any $\gamma>0$, the {\it Arrow-Pratt risk aversion function} of $U_{\gamma}$ is given by
\begin{equation}\label{eq:arrow-pratt}
A_{\gamma}(x):=-\frac{U''_{\gamma}(x)}{U'_{\gamma}(x)}=-\gamma\frac{U''(\gamma x)}{U'(\gamma x)}, \quad x\in\bR.
\end{equation}
The link between \eqref{eq:arrow-pratt} and monotonicity of $(\mu_{\gamma})_{\gamma > 0}$ is stated in Proposition~\ref{pr:arrow-pratt}.
\begin{proposition}\label{pr:arrow-pratt} The utility function $U$ is scale aversion regular if and only if the mapping $\gamma\to A_{\gamma}(x)$ is non-decreasing for any $x\in\mathbb{R}$.
\end{proposition}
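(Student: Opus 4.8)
The plan is to read Proposition~\ref{pr:arrow-pratt} as the classical Arrow--Pratt comparative risk aversion statement, specialised to the one-parameter family $(U_\gamma)_{\gamma>0}$ and phrased through the maps $\mu_\gamma$. Since ``scale aversion regularity'' and ``$\gamma\mapsto A_\gamma(x)$ non-decreasing for all $x$'' are both conjunctions of pairwise assertions (one for each pair $0<\gamma_1\le\gamma_2<\infty$), it suffices to fix such a pair and prove the equivalence of (i) $\mu_{\gamma_1}(X)\le\mu_{\gamma_2}(X)$ for every $X\in L^\infty$, and (ii) $A_{\gamma_1}(x)\le A_{\gamma_2}(x)$ for every $x\in\bR$. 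The key device is the bridge function $f:=U_{\gamma_2}\circ U_{\gamma_1}^{-1}$ on the open interval $I:=U_{\gamma_1}(\bR)$; it satisfies $U_{\gamma_2}=f\circ U_{\gamma_1}$ and is $C^2$ and strictly increasing. Observe that for $X\in L^\infty$ the variable $U_{\gamma_1}(X)$ is valued in a compact subinterval of $I$, so that $\bE[U_{\gamma_1}(X)]\in I$, $U_{\gamma_1}^{-1}(\bE[U_{\gamma_1}(X)])=-\mu_{\gamma_1}(X)$ is finite, and likewise with $\gamma_1$ replaced by $\gamma_2$; hence every certainty equivalent below is well defined.

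First I would relate the curvature of $f$ to the Arrow--Pratt functions, showing that $f$ is concave on $I$ if and only if (ii) holds. Writing $s:=U_{\gamma_1}^{-1}(t)$ one gets $f'(t)=U_{\gamma_2}'(s)/U_{\gamma_1}'(s)>0$; since $t\mapsto s$ is increasing, $f'$ is non-increasing in $t$ exactly when $s\mapsto U_{\gamma_2}'(s)/U_{\gamma_1}'(s)$ is non-increasing in $s$, and differentiating this ratio and dividing by $U_{\gamma_1}'(s)U_{\gamma_2}'(s)>0$ shows that this is in turn equivalent to $-U_{\gamma_2}''(s)/U_{\gamma_2}'(s)\ge -U_{\gamma_1}''(s)/U_{\gamma_1}'(s)$ for all $s$, i.e.\ to $A_{\gamma_2}(x)\ge A_{\gamma_1}(x)$ for all $x\in\bR$.

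For the implication (ii)$\Rightarrow$(i), assume $f$ concave. Using Jensen's inequality together with the identity $f(\bE[U_{\gamma_1}(X)])=U_{\gamma_2}(U_{\gamma_1}^{-1}(\bE[U_{\gamma_1}(X)]))=U_{\gamma_2}(-\mu_{\gamma_1}(X))$ one obtains, for every $X\in L^\infty$,
\[
\bE[U_{\gamma_2}(X)]=\bE[f(U_{\gamma_1}(X))]\le f(\bE[U_{\gamma_1}(X)])=U_{\gamma_2}(-\mu_{\gamma_1}(X)),
\]
and applying the increasing map $U_{\gamma_2}^{-1}$ gives $-\mu_{\gamma_2}(X)\le -\mu_{\gamma_1}(X)$, which is \eqref{eq:regular}. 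For the reverse implication (i)$\Rightarrow$(ii) I would argue by contraposition: if (ii) fails then, by the previous step, $f$ is not concave, hence there are $t_1<t_2$ in $I$ and $\lambda\in(0,1)$ with $f(\lambda t_1+(1-\lambda)t_2)<\lambda f(t_1)+(1-\lambda)f(t_2)$; taking $X:=U_{\gamma_1}^{-1}(Y)\in L^\infty$ for a two-valued $Y$ with $\bP(Y=t_1)=\lambda$ and $\bP(Y=t_2)=1-\lambda$ one computes $\bE[U_{\gamma_1}(X)]=\lambda t_1+(1-\lambda)t_2$ and $\bE[U_{\gamma_2}(X)]=\lambda f(t_1)+(1-\lambda)f(t_2)>f(\bE[U_{\gamma_1}(X)])=U_{\gamma_2}(-\mu_{\gamma_1}(X))$, so that $\mu_{\gamma_1}(X)>\mu_{\gamma_2}(X)$ and \eqref{eq:regular} fails.

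The calculations are routine; the points deserving care are the sign bookkeeping --- because of the minus sign in \eqref{eq:cert.equiv}, a more risk-averse $U_\gamma$ corresponds to a \emph{larger} $\mu_\gamma$ --- and the well-definedness of $f$ and of the inverses, which is precisely where boundedness of $X$ and the openness of the range $I$ of $U_{\gamma_1}$ enter. I also anticipate a minor caveat: the counterexample in (i)$\Rightarrow$(ii) requires $(\Omega,\cF,\bP)$ to support a nondegenerate random variable; if this is not granted at the outset, it can be added as a blanket hypothesis, or the two-point counterexample can be replaced by the second-order expansion of the certainty equivalent of a small gamble $x_0+\vep Z$ around a point $x_0$ where (ii) fails, whose $\vep^2$-coefficient is proportional to $-A_\gamma(x_0)$ and hence also violates \eqref{eq:regular}.
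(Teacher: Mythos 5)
Your proof is correct, but it takes a genuinely different (more self-contained) route than the paper. The paper disposes of the equivalence on $L^1$ (hence on $L^\infty$) in one line by citing Proposition 2.47 of \cite{FolSch2004} --- the classical Arrow--Pratt comparison theorem --- and then spends its effort extending the inequality $\mu_{\gamma_1}(X)\le\mu_{\gamma_2}(X)$ from integrable to general $X\in L^0$ via truncation $X_n=\min\{X,n\}$ and monotone convergence (a step that is strictly ``for completeness,'' since scale aversion regularity is defined on $L^\infty$). What you have written is, in effect, a full proof of that cited result: the bridge function $f=U_{\gamma_2}\circ U_{\gamma_1}^{-1}$, the computation identifying concavity of $f$ with the pointwise ordering of the Arrow--Pratt coefficients, Jensen's inequality for one direction, and a two-point gamble for the converse. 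All the steps check out, including the sign bookkeeping forced by the minus sign in \eqref{eq:cert.equiv} and the observation that $U'>0$ everywhere (needed for $U_{\gamma_1}^{-1}$ to be $C^2$ on the open range $I$), which follows from strict monotonicity plus concavity. Your approach buys independence from the textbook reference and makes transparent exactly where each hypothesis enters; the paper's approach is shorter and additionally covers the $L^0$ case. The one caveat you flag --- that the converse direction needs the probability space to support a nondegenerate two-point variable --- is real but applies equally to the paper's citation of \cite{FolSch2004}, and your proposed fallback via a second-order expansion of the certainty equivalent of a small gamble is the standard repair; neither issue constitutes a gap relative to what the paper itself establishes.
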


\begin{proof} 
For completeness, let us expand the proof to the $L^0$ setting.  First, assume that $\gamma\to A_{\gamma}(x)$ is non-decreasing for any $x\in\mathbb{R}$ and let us show~\eqref{eq:regular}. Noting that the mapping \eqref{eq:cert.equiv} is a negative of certainty equivalent and using Proposition 2.47 of \cite{FolSch2004} we know that $\mu_{\gamma_{1}}(X)\leq \mu_{\gamma_{2}}(X)$ for $X\in L^1$. Thus, it remains to show the the inequality holds for non-integrable $X\in L^0$, too.  If $\mu_{\gamma_{2}}(X)=\infty$, then~\eqref{eq:regular} is trivial. If not, then $\bE[X^{-}]<\infty$, where $X^-$ is the negative part of $X$. Thus, for any $n\in\bN$, the random variable $X_{n}:=\min\{X,n\}$ is integrable and consequently we get $\mu_{\gamma_{1}}(X)\leq \mu_{\gamma_{1}}(X_{n})\leq \mu_{\gamma_{2}}(X_{n})$. Now, letting $n\to\infty$ and using the Beppo--Levi Theorem, we deduce that $\mu_{\gamma_{1}}(X)\leq \mu_{\gamma_{2}}(X)$, which concludes this part of the proof. 

Let us now assume~\eqref{eq:regular} for any $X\in L^0$. In particular,~\eqref{eq:regular} must hold for $X\in L^1$, so that we can use again Proposition 2.47 of \cite{FolSch2004} to conclude that $\gamma\to A_{\gamma}(x)$ is non-decreasing for any $x\in\mathbb{R}$.
\end{proof}
Next, for completeness, let us summarise basic properties of the family $(\mu_{\gamma})_{\gamma > 0}$. 

\begin{proposition}\label{pr:risk.measure.prop}
Let $U$ be a scale aversion regular utility. Then the corresponding family of risk measures $(\mu_{\gamma})_{\gamma>0}$ satisfies the following properties for generic $\gamma>0$, $X,Y\in L^\infty$ and $\lambda\in [0,1]$:
\begin{enumerate}
\item Monotonicity, i.e. if $X\geq Y$, then $\mu_{\gamma}(X)\leq \mu_{\gamma}(Y)$;
\item Law-invariance, i.e. if $X\stackrel{law}{=}Y$, then $\mu_{\gamma}(X)= \mu_{\gamma}(Y)$;
\item Normalisation, i.e. $\mu_{\gamma}(c)=-c$ for any $c\in\bR$;
\item Quasi-convexity, i.e. $\mu_{\gamma}(\lambda X+(1-\lambda)Y) \geq \mu_{\gamma}(X) \vee \mu_{\gamma}(Y)$;
\item Parameter monotonicity, i.e. if $\gamma_1 \leq\gamma_2$, then $\mu_{\gamma_{1}}(X)\leq \mu_{\gamma_{2}} (X)$;
\item Risk aversion continuity, i.e. $\gamma\to \mu_{\gamma}(X)$ is continuous;
\item Risk aversion left limit, i.e. $\mu_{\gamma}(X)\to - \bE[X]$ as $\gamma\to 0$;
\item Risk aversion right limit, i.e. if 
{$\frac{U(\gamma x_1)-U(\infty)}{U(\gamma x_2)-U(\infty)}\to 0$} as $\gamma\to\infty$, for all $x_1>x_2$, then $\mu_{\gamma}(X)\to -
\mathrm{ess.}\inf X$ as $\gamma\to\infty$.
\end{enumerate}
\end{proposition}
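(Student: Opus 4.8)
The plan is to treat the eight items in the natural groups in which they share an argument. Properties (1)--(3) are elementary and I would dispatch them first. Monotonicity and law-invariance follow immediately from the corresponding properties of $X\mapsto \bE[U(\gamma X)]$ together with the fact that $U^{-1}$ is strictly increasing and the overall sign is flipped; normalisation is a one-line computation, $\mu_{\gamma}(c)=-\tfrac1\gamma U^{-1}(U(\gamma c))=-c$. Quasi-convexity is the analytic heart of the ``elementary'' block: since $U_\gamma$ is concave and increasing, $U_\gamma^{-1}$ is convex and increasing, so $x\mapsto -U_\gamma^{-1}(x)$ is concave; combining Jensen's inequality for the concave function $U_\gamma$ with this concavity, or more directly invoking that $\mu_\gamma$ is (minus) a certainty equivalent of a concave utility and hence quasi-convex (this is standard, e.g. F\"ollmer--Schied Prop.~2.46), gives $\mu_\gamma(\lambda X+(1-\lambda)Y)\ge \mu_\gamma(X)\vee\mu_\gamma(Y)$. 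Property (5), parameter monotonicity, is precisely the definition of scale aversion regularity restricted to $L^\infty$, so there is nothing to prove beyond citing~\eqref{eq:regular}.

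For (6), risk aversion continuity, I would fix $X\in L^\infty$, write $\mu_\gamma(X)=-\tfrac1\gamma U^{-1}(\bE[U(\gamma X)])$, and argue by dominated convergence: for $\gamma$ ranging in a compact subinterval of $(0,\infty)$ the random variables $U(\gamma X)$ are uniformly bounded (since $X$ is bounded and $U$ is continuous, hence bounded on compacts), so $\gamma\mapsto\bE[U(\gamma X)]$ is continuous; then $\gamma\mapsto \bE[U(\gamma X)]$ lands in a compact subset of $(U(-\infty)\vee(-\infty),U(\infty))$ on which $U^{-1}$ is continuous, and the prefactor $1/\gamma$ is continuous away from $0$. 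One should note that $\bE[U(\gamma X)]$ stays strictly below $U(\infty)$ because $X$ is essentially bounded, so $U^{-1}$ is evaluated in the interior of its domain.

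Properties (7) and (8) are the limiting statements and I expect (8) to be the main obstacle. For (7), as $\gamma\to 0$ I would Taylor-expand: $U(\gamma x)=U(0)+\gamma U'(0)x+o(\gamma)$ uniformly for $x$ in the (bounded) essential range of $X$, whence $\bE[U(\gamma X)]=U(0)+\gamma U'(0)\bE[X]+o(\gamma)$; applying $U^{-1}$ and expanding it around $U(0)$ (using $(U^{-1})'(U(0))=1/U'(0)$, with $U'(0)>0$ since $U$ is strictly increasing and $C^2$) gives $U^{-1}(\bE[U(\gamma X)])=\gamma\bE[X]+o(\gamma)$, so $-\tfrac1\gamma U^{-1}(\bE[U(\gamma X)])\to -\bE[X]$. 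The $C^2$ hypothesis is exactly what makes these two-term expansions legitimate with uniform remainders on compacts. For (8), let $m:=\mathrm{ess.}\inf X$ and $M:=\mathrm{ess.}\sup X$; the inequality $\mu_\gamma(X)\le -m$ holds always by monotonicity and normalisation (item 3 and $X\ge m$), so it suffices to show $\liminf_{\gamma\to\infty}\mu_\gamma(X)\ge -m$, i.e. $\limsup_\gamma U^{-1}(\bE[U(\gamma X)])/\gamma\le m$. The idea is that the event $\{X<m+\vep\}$ has positive probability $p_\vep>0$ for every $\vep>0$, so $\bE[U(\gamma X)]$ is dominated by the contribution near the essential infimum once $U(\gamma x)-U(\infty)$ for $x$ slightly above $m$ dwarfs $U(\gamma x)-U(\infty)$ for $x$ further up --- this is exactly the point of the hypothesis $\tfrac{U(\gamma x_1)-U(\infty)}{U(\gamma x_2)-U(\infty)}\to 0$ for $x_1>x_2$. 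Concretely I would write, for small $\vep>0$,
\begin{equation}\label{eq:item8.bound}
\bE[U(\gamma X)]-U(\infty)\le p_\vep\bigl(U(\gamma(m+\vep))-U(\infty)\bigr)+(1-p_\vep)\cdot 0,
\end{equation}
using $U(\gamma X)\le U(\infty)$ everywhere and $U(\gamma X)\le U(\gamma(m+\vep))$ on $\{X<m+\vep\}$; since the right-hand side is a negative quantity times $p_\vep$, this forces $\bE[U(\gamma X)]\le U(\infty)+p_\vep(U(\gamma(m+\vep))-U(\infty))$, and applying $U^{-1}$ (increasing) gives $U^{-1}(\bE[U(\gamma X)])\le U^{-1}\bigl(U(\infty)+p_\vep(U(\gamma(m+\vep))-U(\infty))\bigr)$. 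The delicate part is to show this last quantity, divided by $\gamma$, has $\limsup\le m+\vep$; here one writes $U(\infty)+p_\vep(U(\gamma(m+\vep))-U(\infty))=U(\gamma y_\gamma)$ for some $y_\gamma$, compares it with $U(\gamma(m+2\vep))$ using the hypothesis (with $x_1=m+2\vep>x_2=m+\vep$, so that $U(\gamma(m+2\vep))-U(\infty)=o(U(\gamma(m+\vep))-U(\infty))$ and hence eventually $U(\gamma y_\gamma)\ge U(\gamma(m+2\vep))$, i.e. $y_\gamma\le m+2\vep$ is false --- rather $y_\gamma\le m+2\vep$ once $\gamma$ is large), concluding $U^{-1}(\bE[U(\gamma X)])/\gamma\le m+2\vep$ for all large $\gamma$. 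Letting $\vep\downarrow 0$ finishes the proof. I would take care of the edge case $m=-\infty$ (then $-m=+\infty$ and there is nothing to prove beyond divergence, which follows from the same monotone bound applied with any fixed lower truncation level), and of the normalisation $U(\infty)<\infty$ which is part of the standing assumptions on $U$.
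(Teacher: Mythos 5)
Your proposal is correct and follows essentially the same route as the paper: items 1--5 are dispatched by the same elementary observations, item 6 by dominated convergence, and item 8 by comparing $\bE[U(\gamma X)]-U(\infty)$ against $U(\gamma x)-U(\infty)$ at two levels slightly above $\essinf X$ and invoking the ratio hypothesis (the paper phrases this as a contradiction with test points $\essinf X+\vep$ and $\essinf X+\vep/2$, you argue directly with $m+\vep$ and $m+2\vep$ --- same mechanism). The only cosmetic divergence is item 7, where the paper applies L'Hospital's rule with differentiation under the expectation while you Taylor-expand $U$ and $U^{-1}$; these are the same computation.
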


\begin{proof}
The proof of properties 1.--4. is straightforwad and omitted for brevity; the proof of 5. is a direct consequence of Proposition~\ref{pr:arrow-pratt}. To show risk aversion continuity (6.) is is sufficient to use the dominated convergence theorem. Indeed, noting that $U$ is bounded from above and that for any $X\in L^\infty$ there exists $C<0$ such that $X>C$, we get
\[
\textstyle \sup_{x\in\bR}U(x) \geq U(\gamma_n X)\geq U((\sup_{n\in\bN} \gamma_n) C),
\]
where $(\gamma_n)_{n\in\bN}$ is an arbitrary sequence, such that $\gamma_n>0$, $n\in\bN$, and $\gamma_n\to \gamma>0$. Consequently, by dominated convergence theorem we get  $\lim_{n\to\infty}\mu_{\gamma_n}(X)=\mu_{\gamma}(X)$. 

To prove the risk aversion left limit behaviour (7.), we note that the L'Hospital rule is applicable to the function $\gamma\to U^{-1}(\bE[U(\gamma X)])/\gamma$,
as $\gamma\to 0$, and one can differentiate under the
expectation sign since $X\in L^\infty$. Consequently, we get
\[
\lim_{\gamma\to 0}\mu_{\gamma}(X)=-\lim_{\gamma\to 0}\frac{\bE[U'(\gamma X)X]}{U'(\bE[U(\gamma X)])}=-\frac{U'(0)\bE[X]}{U'(0)}=-\bE[X].
\]

Finally, we prove the risk aversion right limit behaviour (8.). For simplicity and without loss of generality, we can assume that $U(\infty)=0$, which implies $U\leq 0$. Indeed, it is sufficient to note that utilities $U$ and $\tilde U:=U-U(\infty)$ induce the same scaled family of certainty equivalents. Now, note that, as $X\geq \essinf X$, we necessarily have $\mu_\gamma(X)\leq -\essinf X$, for any $\gamma>0$, by the monotonicity property. Arguing by contradiction, let there be $X\in L^\infty$ and $\epsilon>0$ such that $\mu_{\gamma}(X)\nearrow -(\essinf X+\varepsilon$), as $\gamma\to\infty$; recall that $\gamma\to \mu_{\gamma}(X)$ is continuous and increasing due to parameter monotonicity (5.) and risk aversion continuity (6.). In particular, for $x_1:=\essinf X+\varepsilon$, this implies
\begin{equation}\label{eq:U.contr1}
\mathbb{E}[U(\gamma X)]\geq U(\gamma x_1),\quad \gamma\to\infty.
\end{equation}
Now, let $x_2:=\essinf X+\varepsilon/2$. By the definition of essential infimum,
we have $\delta:=\mathbb{P}(X\leq x_2)>0$. Since $U<0$, this implies
\begin{equation}\label{eq:U.contr2}
\mathbb{E}[U(\gamma X)]\leq (1-\delta)U(\gamma\esssup X)+\delta U(\gamma x_2) \leq \delta U(\gamma x_2).
\end{equation}
Now, from \eqref{eq:U.contr1} and \eqref{eq:U.contr2} we get $U(\gamma x_1)\leq \delta U(\gamma x_2)$ which in turn leads to a contradiction as for $x_1>x_2$ we should have $U(\gamma x_1)/U(\gamma x_2)\to 0$, when $\gamma\to\infty$.

\bigskip


\end{proof}
While properties 1.--5. in fact hold for any $X\in L^0$, it might not be the case for 6.--8. Indeed, it is not true, in general, that $\gamma\to \mu_{\gamma}(X)$ is continuous in $\gamma$. To see this it is enough to consider exponential utility $U(x)=-e^{-x}$ and $X=-Z^2$, where $Z$ is standard Gaussian; we get $\mu_{\gamma}(X)=\infty$ for all $\gamma\geq 1/2$ and $\mu_{\gamma}(X)<\infty$ for $0<\gamma<1/2$. That said, since $U$ is bounded from above, Fatou's lemma implies the lower semi-continuity property $\liminf_{n\to\infty}\mu_{\gamma_n}(X)\geq \mu_{\gamma}(X)$ for any $X\in L^0$ and sequence $(\gamma_n)$ converging to $\gamma>0$. Note that 6. holds, if $X$ is assumed to be only bounded from below, see proof above for details. The risk aversion limits also do not hold in general. For left limit (7.), set $U(x)=-e^{-x}$ and $X=-e^Z$, where $Z$ is again standard Gaussian; we get $\mu_{\gamma}(X)=\infty$ for $\gamma>0$ but $\E[X]$ is finite. For the right limit (8.), consider $U$ such that $U(x)=x$ for
$x<0$ and let $X\in L^{\infty}$
be such that $X<0$. We get $\mu_{\gamma}(X)=\bE[X]$ for any $\gamma>0$ so (8.) does not hold
(unless $X$ is constant). Note that (8.) holds true e.g. for the exponential utility.

From Proposition~\ref{pr:arrow-pratt} we see that when $\gamma\to 0$, we are approaching the risk-neutral setting, see~\cite{Whi1990}. The introduced scaling is often used in the risk-sensitive stochastic control problems, where $\gamma$ is used to set up the relation between mean and variance in the objective function, see e.g.~\cite{BiePli2003}. For completeness, let us now show some examples of popular utility functions and check if they are scale aversion regular.

\begin{example} {\rm (Exponential utility) Let us define the standard exponential utility by setting
\begin{equation}\label{eq:U.exponential}
U(x):=-e^{-x},\quad x\in\bR.
\end{equation}
It is easy to check that $U\in C^2(\bR)$ and $U$ is strictly increasing, concave, and bounded from above. It is also well known that the exponential utility has a constant relative risk aversion, i.e. we have
\[
A_{\gamma}(x)=\gamma,\quad x\in\bR,
\]
so that $U$ is scale aversion regular, see~\cite{FolSch2004}. Moreover, for any $\gamma>0$, we get
\begin{equation}\label{eq:entropic.risk}
\mu_{\gamma}(X)=\tfrac{1}{\gamma}\ln(\bE[e^{-\gamma X}]),\quad X\in L^\infty,
\end{equation}
which shows that exponential utility defines, through \eqref{eq:cert.equiv}, the family of \emph{entropic risk measures}. It is worth noting that the entropic risk measure given in \eqref{eq:entropic.risk} is cash additive, i.e. for any constant $a\in\bR$ and $X\in L^0$ we get $\mu_{\gamma}(X+a)=\mu_{\gamma}(X)-a$. We refer to \cite{KupSch2009} for more details on this family and its unique properties.}
\end{example}

\begin{example}\label{pl} {\rm (Power-like utility functions) Let us fix $\alpha>0$, $\beta\geq 2$, and set
\[
U(x):=
\begin{cases}
\frac{-(1+x)^{-\alpha}}{\alpha(\alpha+1)} &
\mbox{ for }x\geq 0,\\
\frac{-(1-x)^{\beta}}{\beta(\beta-1)}+
\frac{\beta-\alpha-2}{(\alpha+1)(\beta-1)}x+\frac{1}{\beta(\beta-1)}-
\frac{1}{\alpha(\alpha+1)} &\mbox{ for }x<0.
\end{cases}
\]
It is easy to check that $U\in C^2(\bR)$ and $U$ is strictly increasing, concave, and bounded from above. By direct calculations we also get
\[
A_{\gamma}(x)=
\begin{cases}
\frac{\gamma(\alpha+1)}{1+\gamma x}  & \mbox{ for }x\geq 0,\\
 \frac{\gamma (1-\gamma x)^{\beta-2}}{\frac{(1-\gamma x)^{\beta-1}}{\beta-1}+
\frac{\beta-\alpha-2}{(\alpha+1)(\beta-1)}} & \mbox{ for }x<0.
\end{cases}
\]
It is straightfoward to check that the function $\gamma\to A_{\gamma}(x)$ is non-decreasing for any $x\in\bR$, so that $U$ is scale aversion regular.}
\end{example}

\begin{example}{\rm (Iterated exponential utility) Let us define the iterated exponential utility as
\[
U(x):=-\exp(e^{-x}),\quad x\in\mathbb{R}.
\]
Is is easy to check that $U\in C^2(\bR)$ and $U$ is strictly increasing, concave, and bounded from above. Also, we get  
\[
A_{\gamma}(x)=\gamma(e^{-\gamma x}+1),\quad x\in\bR.
\]
Thus, the function $A_{(\cdot)}(x)$ fails to be non-decreasing, and consequently $U$ is not scale aversion regular.}
\end{example}
At the end of this section we present a few remarks linked to certainty equivalents defined in \eqref{eq:cert.equiv}.

\begin{remark}[Locality of the risk aversion] We note that risk aversion regularity
is a \emph{local} property in the sense that if
it is checked on, say, a collection
of intervals covering $\mathbb{R}$, then it is
satisfied globally. This means that every $U\in C^2(\bR)$ 
that is a \emph{piecewise} power function or an
exponential function (or the logarithm function) satisfies risk aversion regularity.
Example \ref{pl} is a specific case, where $U$ is a power function on both the positive and negative half-lines.
\end{remark}

\begin{remark}[Scale invariant certainty equivalent]
In the limiting case of a linear utility, scaling does not change anything, i.e. for $U(x)=x$ we get $\mu_{\gamma}(X)=\mu_{1}(X)$ for any $\gamma>0$. On $L^0$, the family of utilities $U(x)=ax+b$, for $a>0$ and $b\in\bR$ is the only family that produces a scale-invariant certainty equivalent, see e.g.~\cite{PraDro2013}. 
\end{remark}

\begin{remark}[Stochastic control of MDPs with average-cost criteria based on certainty equivalents]
Results of \cite{CavHer2016} show that if $U$ behaves asymptotically like the exponential utility $-e^{-\gamma x}$ then the corresponding control problems admits the same solutions as for the exponential utility, for a suitable one-step reward functions. In other words, for any utility $U$ the quantity $\lambda:= \lim_{x\to\infty}A_{\gamma}(x)$ is a key to the long-run optimal portfolio choice problems under the settings considered e.g. in \cite{CavHer2016} or \cite{Ste2023}. 
\end{remark}

\begin{remark}[Affine transforms of utility measure]\label{rem:affine}
One may wonder about another type of scaling: $\tilde U_{\gamma}(x):=\gamma U(x)$, for $\gamma>0$. This is essentially different from the scaling introduced in this paper and has different purpose. In particular, $\tilde{U}_{\gamma}$ has the same risk-aversion and it induces the same preference order as $U$. Additive shifts applied to $U_{\gamma}$ also do not change the preference order, see \cite{FolSch2004}.
\end{remark}

\begin{remark}[Certainty equivalents, mean value principles, and time-conistency]
Certainty equivalents and their properties were extensively studied in the insurance literature, see e.g. \cite{Ger1979}, where they are often referred to as {\it mean value principles}. They also play an important role in the dynamic setting as they are the only maps which are strongly time-consistent for any filtration, see e.g.~\cite{Ger1979,KupSch2009}.
\end{remark}

\begin{remark}[{Certainty equivalents as risk measures}]
In Proposition~\ref{pr:risk.measure.prop}, we summarised the basic axiomatic properties of the family $(\mu_{\gamma})$ which is essentially a family of certainty equivalents. One might ask a question for which $U$ the corresponding family $(\mu_{\gamma})_{\gamma>0}$ satisfies cash-additivity or convexity. It turns out that in both cases the exponential utility is the only utility function for which this properties are satisfied, see \cite{Mul2007} for details.
\end{remark}

\begin{remark}[{Cash additivity and Optimized Certainty Equivalents}]
One can consider an alternative to family $(\mu_{\gamma})_{\gamma>0}$ using the concept of optimised certainty equivalents. Namely, following \cite{BenTeb1986}, one can define a family of maps $(\rho_{\gamma})_{\gamma>0}$ given by $\rho_{\gamma}(X)=\sup_{c\in\bR}\{c+\bE[\tilde U_{\gamma}(X-c)]\}$, where $\tilde U_{\gamma}(\cdot):=U(\gamma\cdot)/\gamma$. It turns out, that while this family induces different order when compared to classical certainty equivalent,  and it is cash-additive, it also satisfies certain risk aversion ordering, see Section 2.4 in \cite{BenTeb2007} for details. We also note that if one inherently requires cash-additivity, it can be in fact induced directly on certainty equivalents by considering the modified family $(\hat\mu_{\gamma})_{\gamma>0}$ given by
\[
\hat\mu_{\gamma}(X):=\sup_{c\in\bR}\{c+\mu_{\gamma}(X-c)]\}.
\]
In particular, note that the family $(\hat\mu_{\gamma})_{\gamma>0}$ maintains most of the properties listed in Proposition~\ref{pr:risk.measure.prop}, including the risk-aversion parameter monotonicity.
\end{remark}


\section{Utility-based acceptability index}\label{section3}
In this section we introduce the main object of study in this paper, i.e. the utility-based acceptability index based on the family introduced in \eqref{eq:cert.equiv}. We follow the robust duality framework introduced in \cite{CheMad2009} in which the acceptability index was defined as a risk acceptance induced supremum over a parameter-increasing family of risk measures.

\begin{definition}[Utility-based acceptability index]
Let $U$ be a scale aversion regular utility. Then, the {\it acceptability index} $\alpha\colon L^{0} \to [0,+\infty]$ based on utility $U$ is given by
\begin{equation}\label{eq:a.index}
\alpha(X):=\sup\{\gamma>0:\mu_{\gamma}(X)\leq 0\},\quad X\in L^0,
\end{equation}
where $(\mu_{\gamma})_{\gamma>0}$ is a family defined in~\eqref{eq:cert.equiv}; in \eqref{eq:a.index} we use convention $\sup\emptyset:= 0$.
\end{definition}

For brevity, we often refer to utility-based acceptability indexes defined in \eqref{eq:a.index} simply as UAI. Also, as in the previous section, we restrict ourselves to the $L^\infty$ domain. In this setting, Cherny and Madan state and prove eight properties for the acceptability indices based on coherent risk measures, see \cite{CheMad2009}. Namely, this refers to: {\it monotonicity}, {\it arbitrage consistency}, {\it quasi-concavity}, {\it law invariance}, {\it expectation consistency}, {\it Fatou property}, {\it consistency with second-order stochastic dominance}, and {\it scale invariance}. The first seven of these properties also hold for UAIs as Proposition \ref{sevenprop} below shows. On the other hand, the scale invariance property, i.e. $\alpha(\lambda X)=\alpha(X)$ for $\lambda>0$ and $X\in L^{\infty}$, does not hold in our case due to the lack of positive homogeneity of the underlying risk measure. That said, the UAI introduced in this paper satisfies the \emph{inverse homogeneity property}. As pointed out in the concluding section of \cite{CheMad2009}, scale invariance excludes the use of their acceptability index for measuring preferences. This is no longer the case in the present setting -- that is why $\alpha$ can be used for portfolio choice problems, see Section \ref{maxi} for details.


\begin{proposition}[Axiomatic properties of UAIs]\label{sevenprop}
Let $U$ be a scale aversion regular utility. Then, the acceptability index $\alpha$ given in \eqref{eq:a.index} satisfies the following properties for $X,Y\in L^\infty$:
\begin{enumerate}
\item Monotonicity, i.e. if $X\leq Y$, then $\alpha(X)\leq \alpha (Y)$;
\item Arbitrage consistency, i.e. $X\geq 0$ if and only if $\alpha(X)=\infty$;
\item Quasi-concavity, i.e. $\alpha(\lambda X+(1-\lambda)Y)\geq \alpha(X)\wedge \alpha(Y)$;
\item Law invariance, i.e. if $X\stackrel{law}{=}Y$, then $\alpha(X)=\alpha(Y)$;
\item Expectation consistency, i.e. if $E[X]< 0$, then $\alpha(X)=0$ and if $E[X]>0$ then $\alpha(X)>0$.
\item Fatou property, i.e. if $\alpha(X_n)\geq x$, for $n\in\bN$ and $X_n\in L^\infty$, such that $X_n\to X$ (a.s.), then $\alpha(X)\geq x$.
\item Consistency with second-order stochastic dominance, i.e. if $X\preceq_{2} Y$, then $\alpha(X)\leq \alpha(Y)$.

\item Inverse positive homogeneity, i.e. for $\lambda>0$ we get $\alpha(\lambda X)=\lambda^{-1}\alpha(X)$.
\end{enumerate}
\end{proposition}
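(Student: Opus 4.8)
The plan is to read off all eight properties from the structural identity $\mu_\gamma(\lambda X)=\lambda\,\mu_{\lambda\gamma}(X)$ together with the facts about $(\mu_\gamma)_{\gamma>0}$ recorded in Propositions~\ref{pr:arrow-pratt} and~\ref{pr:risk.measure.prop}. Throughout I write $S(X):=\{\gamma>0:\mu_\gamma(X)\le0\}$, so that $\alpha(X)=\sup S(X)$ with $\sup\emptyset=0$. The one observation used repeatedly is that, by parameter monotonicity (Proposition~\ref{pr:risk.measure.prop}.5), $S(X)$ is a down-set: $\gamma\in S(X)$ and $0<\gamma'\le\gamma$ force $\mu_{\gamma'}(X)\le\mu_\gamma(X)\le0$. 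Hence $S(X)$ is an interval of the form $(0,\alpha(X))$ or $(0,\alpha(X)]$ (possibly empty), and in particular $\gamma<\alpha(X)$ already implies $\mu_\gamma(X)\le0$.

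Properties (1), (4) and (7) follow by the same comparison: each of $X\le Y$, $X\stackrel{law}{=}Y$, $X\preceq_2 Y$ gives $\bE[U_\gamma(X)]\le\bE[U_\gamma(Y)]$ — respectively with equality, and for (7) because every $U_\gamma$ is concave and increasing — so $\mu_\gamma(X)\ge\mu_\gamma(Y)$ for all $\gamma$, whence $S(X)\subseteq S(Y)$ and $\alpha(X)\le\alpha(Y)$ (with equality in the law-invariant case). For (5), Jensen's inequality for the concave $U_\gamma$ yields $\mu_\gamma(X)\ge-\bE[X]$ for every $\gamma$; thus $\bE[X]<0$ makes $S(X)$ empty and $\alpha(X)=0$, while $\bE[X]>0$ together with the left limit $\mu_\gamma(X)\to-\bE[X]<0$ as $\gamma\to0$ (Proposition~\ref{pr:risk.measure.prop}.7) produces some $\gamma_0\in S(X)$, so $\alpha(X)\ge\gamma_0>0$. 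Property (3) uses quasi-convexity of $\mu_\gamma$ (Proposition~\ref{pr:risk.measure.prop}.4): for every $\gamma<\alpha(X)\wedge\alpha(Y)$ one has $\mu_\gamma(X)\le0$ and $\mu_\gamma(Y)\le0$, hence $\mu_\gamma(\lambda X+(1-\lambda)Y)\le\mu_\gamma(X)\vee\mu_\gamma(Y)\le0$, so that $\alpha(\lambda X+(1-\lambda)Y)\ge\alpha(X)\wedge\alpha(Y)$.

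For (8) note that $\mu_\gamma(\lambda X)=-\tfrac1\gamma U^{-1}(\bE[U(\lambda\gamma X)])=\lambda\,\mu_{\lambda\gamma}(X)$, so (since $\lambda>0$) $\gamma\in S(\lambda X)\iff\lambda\gamma\in S(X)$, i.e.\ $S(\lambda X)=\lambda^{-1}S(X)$; taking suprema (and respecting $\sup\emptyset=0$) gives $\alpha(\lambda X)=\lambda^{-1}\alpha(X)$. For (2), if $X\ge0$ then $U_\gamma(X)\ge U_\gamma(0)$, so $U_\gamma^{-1}(\bE[U_\gamma(X)])\ge0$ and $\mu_\gamma(X)\le0$ for every $\gamma$, i.e.\ $\alpha(X)=\infty$. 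Conversely, $\alpha(X)=\infty$ forces, by the down-set remark, $\mu_\gamma(X)\le0$ — equivalently $\bE[U(\gamma X)]\ge U(0)$ — for all $\gamma>0$. Suppose toward a contradiction that $\bP(X<0)>0$, and fix $\epsilon,\delta>0$ with $\bP(X\le-\epsilon)\ge\delta$; normalising $U(\infty)=0$ (which leaves $(\mu_\gamma)$ unchanged) we may take $U\le0$, and then $\bE[U(\gamma X)]\le\delta\,U(-\gamma\epsilon)$. Since any concave, strictly increasing $U\in C^2(\bR)$ satisfies $U(x)\to-\infty$ as $x\to-\infty$ — indeed $U'$ is positive and non-increasing, so $U(0)-U(x)\ge U'(0)(-x)\to\infty$ — the right-hand side tends to $-\infty$, contradicting $\bE[U(\gamma X)]\ge U(0)$ for large $\gamma$. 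Hence $X\ge0$.

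Finally, (6) is a lower-semicontinuity statement for the sublevel constraint. Fix any $\gamma<x$; since $\alpha(X_n)\ge x>\gamma$, the down-set remark gives $\mu_\gamma(X_n)\le0$, i.e.\ $\bE[U(\gamma X_n)]\ge U(0)$, for all $n$. As $U$ is continuous and bounded above and $X_n\to X$ a.s., Fatou's lemma applied to the non-negative variables $U(\infty)-U(\gamma X_n)$ yields $\bE[U(\gamma X)]\ge\limsup_n\bE[U(\gamma X_n)]\ge U(0)$, hence $\mu_\gamma(X)\le0$; as $\gamma<x$ was arbitrary, $\alpha(X)\ge x$ (the cases $x=0$ and $x=\infty$ being immediate, the latter via (2)). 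The only genuinely non-routine point is the converse half of (2): one cannot invoke the right-limit property of Proposition~\ref{pr:risk.measure.prop}.8 (it fails for some admissible $U$) and must instead rule out $\bP(X<0)>0$ directly, using the blow-up of $U$ at $-\infty$.
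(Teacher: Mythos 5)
Your proof is correct and follows essentially the same route as the paper's: each property is read off from the monotonicity, quasi-convexity, Jensen/left-limit, Fatou, and scaling facts about $(\mu_\gamma)_{\gamma>0}$, with the converse of arbitrage consistency handled by showing $\bE[U(\gamma X)]\to-\infty$ when $\bP(X<0)>0$. Your down-set observation about $S(X)$ and the explicit blow-up estimate $U(0)-U(x)\ge U'(0)(-x)$ merely make explicit details the paper leaves implicit.
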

\begin{proof}
For completeness, let us present the whole proof point by point.

\medskip

\noindent {\it Monotonicity} (1.): the proof of monotonicity follow directly from the (inverse) monotonicity of $u_{\gamma}(\cdot)$, for any $\gamma>0$, as well as parameter monotonicity of the family $(u_{\gamma}(\cdot))_{\gamma>0}$, see Proposition~\ref{pr:risk.measure.prop}.

\medskip

\noindent {\it Arbitrage consistency} (2.): assume that $X\geq 0$. Then, for any $\gamma>0$, we get $\mu_{\gamma}(X)\leq \mu_{\gamma}(0)=0$. Consequently, directly form \eqref{eq:a.index}, we get $\alpha(X)=\infty$. On the other hand, let us assume that there exists $X\in L^0$ such that $\alpha(X)=\infty$ and $\bP[X<0]>0$. We know that $\alpha(X)=\infty$ implies that for any $\gamma>0$ we get
\begin{equation}\label{eq:cert1}
U^{-1}(\bE[U(\gamma X)])\geq 0.
\end{equation}
On the other hand, from $\bP[X<0]>0$, noting that $U$ is bounded from above, strictly monotone, and concave, we get $\bE[U(\gamma X)]\to -\infty$, as $\gamma\to\infty$, which contradicts \eqref{eq:cert1}.

\medskip

\noindent {\it Quasi-concavity} (3.):  for any $\gamma>0$, the risk measure $u_{\gamma}(\cdot)$ is quasi-convex, see Proposition~\ref{pr:risk.measure.prop}. Consequently, we get
\begin{align*}
\alpha(\lambda X+(1-\lambda Y)&\geq \sup\{\gamma>0\colon \mu_{\gamma}(\lambda X+(1-\lambda)Y)\leq 0 \}\\
&\geq \sup\{\gamma>0\colon \mu_{\gamma}(X) \vee \mu_{\gamma}(Y)\leq 0 \}\\
& \geq \alpha(X) \wedge \alpha(Y).
\end{align*}

\medskip

\noindent {\it Law-invariance} (4.):  This follows directly from the law invariance of $u_{\gamma}(\cdot)$, see also~\cite{KupSch2009}.

\medskip

\noindent {\it Expectation consistency} (5.):  assume $\bE(X)< 0$. Then by Jensen's inequality and strict monotonicity of $U$, for any $\gamma>0$ we have
\[
U(0)> U(\bE[\gamma X]) \geq \bE[U(\gamma X)].
\]
Taking $U^{-1}$ on both sides and multiplying by $-1/\gamma$ we get that $\mu_{\gamma}(X)>0$ for $\gamma>0$ which implies that $\alpha(X) = \sup\{\gamma>0\colon \mu_{\gamma}(X)\leq 0 \}=\sup\{\emptyset\}=0$. Conversely, assume that $\bE[X]>0$. From Proposition~\ref{pr:risk.measure.prop} we know that the mapping $\gamma\to \mu_{\gamma}(X)$ is continuously decreasing to $-\bE(X)$. Consequently, there exists $\gamma>0$ such that $\mu_{\gamma}(X)\leq 0$ which immediately implies $\alpha(X)>0$, by the lower semi-continuity of $\gamma\to \mu_{\gamma}(\cdot)$.

\medskip

\noindent {\it Fatou property} (6.):  Recalling that $U$ is bounded from above, the statement follows directly from Fatou's lemma. 

\medskip

\noindent {\it Consistency with second-order stochastic dominance} (7.):  We know that $X$ second-order stochastically dominates $Y$ if and only if for any increasing and concave function $V\colon\bR\to\bR$ we get $\bE[V(X)]\leq \bE[V(Y)]$. Consequently, the claim follow by simply setting $U=V$.

\medskip

\noindent {\it Inverse positive homogeneity} (8.):  For any $\lambda>0$ we get
\begin{align*}
\alpha(\lambda X) &=\sup\{\gamma>0\colon \mu_{\gamma}(\lambda X)\leq 0 \}\\ 
& =\sup\{\gamma>0\colon \lambda\mu_{(\lambda\gamma)}(X)\leq 0 \}\\
& =\lambda^{-1} \sup\{\lambda\gamma\in \bR_+\colon \mu_{(\lambda\gamma)}(X)\leq 0 \}\\
& =\lambda^{-1} \alpha(X).
\end{align*}
\end{proof}
As before, for simplicity, we formulated the results in the restricted $L^\infty$ setting; note that most properties (i.e. 1.--4. and 8.) effectively hold on $L^0$. As already stated before, UAI satisfies the  {\it inverse positive homogeneity property} ($\alpha(\lambda X)=\lambda^{-1}\alpha(X)$) rather than {\it scale invariance property} ($\alpha(\lambda X)=\alpha(X)$), which was the defining property stated in the coherent acceptability index framework in \cite{CheMad2009}. This makes the indices introduced in this paper essentially different from the classical ones. In particular, we believe that our definition is better suited for certain stochastic control problems. In particular, in the stochastic control literature, the mapping $\alpha$ is typically applied to the log-return (or log-growth) of the position rather than to the P\&L vector, see e.g.~\cite{PitSte2016,BieCiaPit2013}. From mathematical viewpoint, this is done to make the underlying dynamic close to the MDP framework in which we can control the logarithm dynamics of the process. In this framework, the position scale invariance is effectively pre-assumed since log-return itself is scale invariant and the role of inverse positive homogeneity relates to control over process log-return rather than net size scaling, see Remark~\ref{rem:log.return} for details.

\begin{remark}[{Inverse scaling in the log-return setting}]\label{rem:log.return}
The inverse positive homogeneity might appear naturally in the log-return setting when standard performance (optimisation) measures are used. To better understand this, let us consider a simplified setting within the classic mean-variance optimisation framework with log-return control and objective criterion given by
\[
\mu_{\gamma}(\ln X)\approx \bE[\ln X] +\tfrac{\gamma}{2}\textrm{Var}[\ln X],
\]
for any risk aversion $\gamma>0$. Note that $(\mu_{\gamma})$ could be seen as a second order Taylor approximation of the risk-sensitive control objective criterion which in turn is a certainty equivalent (for exponential utility), see \cite{BiePli2003} for details. The acceptability index dual to this family identifies the biggest risk aversion $\gamma>0$ for which position $\ln X$ is acceptable in terms of mean-to-variance ratio. This risk-to-reward criterion is inverse positive homogeneous by design due to mean and variance properties.
\end{remark}

\begin{remark}[{Star-shaped acceptability indices}] 
The family of  \emph{star-shaped} acceptability indices has been recently introduced in \cite{Righi2022}. In a nutshell, this family satisfies the property $\alpha(\lambda X)\leq \alpha(X)$ for all $X$ and for $\lambda\geq 1$, which is weaker than scale-invariance. It is clear 
directly from the inverse positive homogeneity property
that the utility-based acceptability indices introduced in this paper are star-shaped.
\end{remark}

\section{Measuring portfolio performance using utility-based acceptability indices}\label{S:final.section}

Let $(\Omega,\mathcal{F},(\mathcal{F}_{t})_{t\in \bT},\bP)$ be a continuous-time filtered probability space where the time horizon is either finite, i.e. $\bT=[0,T]$ for some $T\in\bR$, or infinite, i.e. $\bT=[0,\infty)$. Let $(S_{t})_{t\in\bT}$ be an adapted $\mathbb{R}^{d}$-valued c\`adl\`ag semimartingale, $d\in\bN$.
We assume, without loss of generality, that the investor in consideration has initial wealth $V_0:=1$, so that we can associate portfolio log-return with its log-growth. Denote by $\Phi$ the set of $\mathbb{R}^{d}$-valued predictable processes such that for any $(\phi_{t})_{t\in\bT}\in \Phi$ the corresponding portfolio value process $(V_t(\phi))_{t\in\bT}$, given by
\[
V_{t}(\phi):=1+\int_{0}^{t} \phi_{u}\, dS_{u},\quad t\in\bT,
\]
is non-negative, i.e. $V_{t}(\phi)> 0$ almost surely, for $t\in\bT$. In this section, given scale aversion regular utility $U$, we are interested in calculating or maximising the value 
\begin{equation}\label{eq:performance}
\alpha(\ln V_t(\phi) - \ln G_t),\quad t\in\bT,
\end{equation}
where $G:=(G_t)_{t\in\bT}$ corresponds to a pre-specified benchmark value position. The value \eqref{eq:performance} is essentially quantifying the risk aversion which makes the benchmarked return $\ln V_t(\phi) - \ln G_t$ acceptable at $t\in\bT$. In particular, in the finite time setting, we are often interested in maximising the terminal value $\alpha(\ln V_T - \ln G_T)$ while in the infinite time setting we want to measuring the limit performance of the mapping $t\to \alpha(\ln V_t(\phi) - \ln G_t)$, e.g. by considering the limit inferior with respect to time. The position $G$ might correspond to a position we want to outperform or hedge. For instance, one might consider a risk-free position $G_t=e^{\lambda t}$, where $\lambda\in \bR$ is some constant growth rate we want to outperform.

\subsection{Finite-time horizon portfolio performance maximisation}\label{maxi}
In this section we fix $\bT=[0,T]$ with a terminal time $T>0$. The main goal of this section is to show that under suitable assumptions the problem
\begin{equation}\label{eq:finite.obj}
J(\phi,G):=\alpha(\ln(V_T(\phi))-\ln(G_T))\to \max
\end{equation}
can be solved; the maximum in \eqref{eq:finite.obj} is taken with respect to $\phi\in \Phi$, while $G$ and $\alpha$ are pre-fixed. To this end, we make the following assumptions:

\medskip

\begin{enumerate}
\item[(\namedlabel{A.1}{A.1})] (Utility-based performance measure as objective criterion) The objective function $\alpha\colon L^{0} \to [0,+\infty]$ is an UAI for which the underlying utility satisfies $U(0)= 0$.

\item[(\namedlabel{A.2}{A.2})] (Local martingale property) There exists $\bQ\sim \bP$ such that $S$ is a $\bQ$-local martingale and we have $\bE\left[U^*\left(d\bQ/d\bP\right)\right]<\infty$, where
\[
U^*(y):=\sup_{x\in\mathbb{R}}[U(x)-xy],\quad y>0,
\]
is the convex conjugate of $U$.
\item[(\namedlabel{A.3}{A.3})] (Finite benchmark position utility) We assume that $G_T\in L^0$ is such that $G_T>0$ and $\bE[U(-\ln^-(G_T))]>-\infty$.
\end{enumerate}

\medskip

Assumption \eqref{A.1} is made to recall the optimisation context; $U(0)=0$ can always be achieved by adding a
constant to the utility functions; 

Assumptions \eqref{A.2} and \eqref{A.3} are standard conditions in the utility maximization literature, see \cite{Ras2018}. Note that, contrary to most studies on utility maximization, we make no additional assumption on the asymptotic elasticity of $U$ (at $-\infty$); see \cite{Sch2001}, where a thorough discussion of this concept in utility maximization is given.


We are now ready to present the main result of this section, i.e. Theorem~\ref{maxim}; the proof of this theorem is based on techniques adapted from \cite{Ras2018}.

\begin{theorem}\label{maxim} Let \eqref{A.1}, \eqref{A.2}, and \eqref{A.3} hold. 
Then there is $\phi^{*}\in \Phi$
such that \begin{equation}\label{problem}
J(\phi^*,G)=\sup_{\phi\in\Phi}J(\phi,G)
\end{equation}
Furthermore, $\sup_{\phi\in\Phi}J(\phi,G)=\infty$ if and only if there is $\phi\in\Phi$ such that $V_T(\phi)\geq G_T$, that is, $G_T$ can be superhedged.
\end{theorem}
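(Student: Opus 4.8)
The plan is to reduce the maximisation of $\alpha$ to a one-parameter family of benchmarked logarithmic utility maximisation problems and then to run a compactness argument along a near-optimising sequence, in the spirit of \cite{Ras2018}. Write $X_\phi:=\ln V_T(\phi)-\ln G_T$ and $a^*:=\sup_{\phi\in\Phi}J(\phi,G)$. Since $U(0)=0$ by \eqref{A.1}, for every $\gamma>0$ we have $\mu_\gamma(X_\phi)\leq0$ if and only if $\bE[U(\gamma X_\phi)]\geq0$, so interchanging the two suprema in the definition of $J$ and $\alpha$ yields
\[
a^*=\sup\{\gamma>0\colon \exists\,\phi\in\Phi,\ \bE[U(\gamma X_\phi)]\geq0\}.
\]
By parameter monotonicity of $(\mu_\gamma)_{\gamma>0}$ (Proposition~\ref{pr:risk.measure.prop}, item 5) the set on the right is a left interval, so for every $\gamma<a^*$ there is $\phi\in\Phi$ with $\bE[U(\gamma X_\phi)]\geq0$. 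The quantitative estimate I would use repeatedly follows from $U(x)\leq U'(0)x$ (concavity and $U(0)=0$) together with $U\leq U(\infty)=:c<\infty$: whenever $\bE[U(\gamma X_\phi)]\geq0$ one gets $\bE[X_\phi^-]\leq c/(U'(0)\gamma)$, and since $\bE[(\ln G_T)^-]<\infty$ by \eqref{A.3} this also controls $\bE[(\ln V_T(\phi))^-]$.

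I would first settle the superhedging equivalence, which also covers the case $a^*=\infty$. If $V_T(\phi)\geq G_T$ for some $\phi\in\Phi$ then $X_\phi\geq0$, so $\alpha(X_\phi)=\infty$ by the $L^0$ version of arbitrage consistency (Proposition~\ref{sevenprop}, item 2), whence $a^*=\infty$ and $\phi^*=\phi$ is optimal. Conversely, if $a^*=\infty$, pick for each $n$ some $\phi_n\in\Phi$ with $\bE[U(nX_{\phi_n})]\geq0$, so that $\bE[X_{\phi_n}^-]\to0$. Passing to convex combinations $\tilde\phi_n\in\mathrm{conv}(\phi_n,\phi_{n+1},\dots)$ -- which stay in $\Phi$, with $V_T(\tilde\phi_n)=\sum_k\lambda^{(n)}_kV_T(\phi_k)$ and $\ln V_T(\tilde\phi_n)\geq\sum_k\lambda^{(n)}_k\ln V_T(\phi_k)$ by concavity of $\ln$, so that $\bE[X_{\tilde\phi_n}^-]\to0$ -- a Koml\'os/Delbaen--Schachermayer compactness principle, available since \eqref{A.2} yields NFLVR, produces along a subsequence $V_T(\tilde\phi_n)\to\xi$ a.s.\ with $0<\xi<\infty$ a.s.\ (finiteness from the $\bQ$-supermartingale bound $\bE^{\bQ}[V_T(\cdot)]\leq1$, positivity from the $L^1$-bound on $X_{\tilde\phi_n}^-$), and $\xi\leq V_T(\psi)$ for some $\psi\in\Phi$ by optional decomposition. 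Fatou's lemma then forces $\bE[(\ln G_T-\ln\xi)^+]=0$, i.e.\ $G_T\leq\xi\leq V_T(\psi)$, so $G_T$ is superhedgeable and $\psi$ is optimal with $\alpha(X_\psi)=\infty$.

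When $a^*<\infty$ the existence of an optimiser follows the same scheme. Take $\gamma_n\uparrow a^*$ with $\gamma_n<a^*$ and $\phi_n\in\Phi$ with $\bE[U(\gamma_nX_{\phi_n})]\geq0$; by parameter monotonicity each $\phi_n$ satisfies $\bE[U(\gamma X_{\phi_n})]\geq0$ for all $\gamma\in(0,\gamma_n]$, so a convex combination $\tilde\phi_n\in\mathrm{conv}(\phi_n,\phi_{n+1},\dots)$ obeys, by concavity of $\ln$, monotonicity of $U$, and then concavity of $U$, $\bE[U(\gamma_nX_{\tilde\phi_n})]\geq\sum_{k\geq n}\lambda^{(n)}_k\bE[U(\gamma_nX_{\phi_k})]\geq0$. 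The compactness principle again gives $V_T(\tilde\phi_n)\to\xi$ a.s.\ with $0<\xi<\infty$ a.s.\ and $\xi\leq V_T(\phi^*)$ for some $\phi^*\in\Phi$; since $\gamma_nX_{\tilde\phi_n}\to a^*(\ln\xi-\ln G_T)$ a.s.\ and $U$ is continuous and bounded above, Fatou's lemma gives $\bE[U(a^*(\ln\xi-\ln G_T))]\geq\limsup_n\bE[U(\gamma_nX_{\tilde\phi_n})]\geq0$, and $\ln\xi-\ln G_T\leq X_{\phi^*}$ together with monotonicity of $U$ upgrades this to $\bE[U(a^*X_{\phi^*})]\geq0$, i.e.\ $\mu_{a^*}(X_{\phi^*})\leq0$. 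Hence $\alpha(X_{\phi^*})\geq a^*$, and by maximality $\alpha(X_{\phi^*})=a^*$, so $\phi^*$ solves \eqref{problem}.

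The main obstacle is the compactness/closedness step: extracting a limiting strategy inside $\Phi$ from a merely near-optimal sequence, keeping the limiting terminal wealth strictly positive and finite, and producing the dominating admissible wealth by optional decomposition -- all without any assumption on the asymptotic elasticity of $U$. This is exactly where the techniques of \cite{Ras2018} are needed; there the dual integrability condition $\bE[U^*(d\bQ/d\bP)]<\infty$ in \eqref{A.2} enters via the Fenchel--Young inequality $U(x)\leq U^*(y)+xy$, which bounds $\bE[U(\gamma X_\phi)]$ from above by $\bE[U^*(\lambda\,d\bQ/d\bP)]-\gamma\lambda\,\bE^{\bQ}[\ln G_T]$ and so prevents the utility of the optimising sequence from escaping to $+\infty$.
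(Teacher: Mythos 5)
Your proposal is correct in its essentials and follows the same architecture as the paper's proof: take a maximising sequence $\phi_n$ with levels $\gamma_n\uparrow a^*$, pass to forward convex combinations (which preserve $\bE[U(\gamma_n X_{\tilde\phi_n})]\geq 0$ by concavity of $\ln$ and of $U$, exactly as in the paper's step following \eqref{chanel1}), extract an a.s.\ limit by the Koml\'os-type compactness of \cite{DelSch2006}, dominate the limit by the terminal wealth of an admissible strategy (you via optional decomposition, the paper via Theorem 9.4.2 of \cite{DelSch2006} --- interchangeable tools here), and close with reverse Fatou using that $U$ is bounded above; the superhedging dichotomy is handled the same way. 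The one place you genuinely diverge is the auxiliary integrability estimate. The paper bounds $\mathbb{E}_{\bQ}[\gamma_n Y_n^-]$ via the Fenchel inequality --- this is precisely where $\bE[U^*(d\bQ/d\bP)]<\infty$ from \eqref{A.2} enters --- and then manufactures boundedness in $\bQ$-probability of the stochastic integrals out of $\ln^{\pm}$ bounds. You instead bound $\bE[X_\phi^-]\leq U(\infty)/(U'(0)\gamma)$ under $\bP$ using only $U(0)=0$, concavity and boundedness from above, and obtain boundedness in probability directly from the supermartingale bound $\mathbb{E}_{\bQ}[V_T(\phi)]\leq 1$; both steps are valid and arguably more elementary. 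The consequence is that your closing paragraph misdescribes the role of the dual condition: it cannot be needed to ``prevent the utility from escaping to $+\infty$'' since $U$ is bounded above by standing assumption, and your argument as written never actually invokes $\bE[U^*(d\bQ/d\bP)]<\infty$. You should either own this as a simplification (your $\bP$-estimate really does appear to replace the Fenchel step) or verify that nothing in the positivity/optional-decomposition chain secretly requires the $\bQ$-integrability of $Y_n^-$; as it stands, the claim about where \eqref{A.2} ``enters'' is not supported by your own proof.
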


\begin{proof}
For brevity, we set $\mathcal{S}:=\sup_{\phi\in\Phi}J(\phi,G)$; note that $\mathcal{S}$ could be infinite. If $\mathcal{S}=0$, then an arbitrary $\phi^{*}\in \Phi${}
satisfies \eqref{problem}. Thus we may assume from now on that $\mathcal{S}\in (0,\infty]$. Let us choose
a sequence $(\phi_{n})$, where $\phi_{n}\in\Phi$, such that 
\[
\gamma_{n}:=\alpha(\ln(V_T(\phi_{n}))-\ln(G_T))
\]
is a finite, positive, and increasing sequence, and we have $\gamma_n \to \mathcal{S}$ as $n\to\infty$. For brevity, we also introduce notation $Y_{n}:=\ln(V_T(\phi_{n}))-\ln(G_T)$, $n\in\bN$. By the lower semi-continuity of $\gamma\to \mu_{\gamma}(\cdot)$ we get
\begin{equation}\label{chanel}
\mu_{\gamma_n}(Y_n)\leq 0,
\end{equation}
which also implies 
\begin{equation}\label{chanel1}
\mu_{\gamma_n}(Y_m)\leq \mu_{\gamma_n}(Y_m)\leq 
0,\ m\geq n.
\end{equation}
By Fenchel inequality, $U(0)=0$, and 
\eqref{chanel}, we get
\begin{eqnarray*}
\mathbb{E}_{\bQ}[\gamma_{n} {Y}_{n}^{-}] &\leq& \mathbb{E}[U^*(d\bQ/d\bP)]-\mathbb{E}[U(-\gamma_{n}{Y}_{n}^{-})]\\
&=& \mathbb{E}[U^*(d\bQ/d\bP)]-\mathbb{E}[U(\gamma_{n}{Y}_{n})]+ \mathbb{E}[U(\gamma_{n}{Y}_{n}^{+})]       \\
&\leq& \mathbb{E}[U^*(d\bQ/d\bP)]-U(0)+U(\infty),
\end{eqnarray*}
which in turn implies
\begin{equation}\label{qbound1}
\sup_{n\in\bN}\mathbb{E}_{\bQ}[\gamma_{0}{Y}_{n}^{-}]\leq \sup_{n\in\bN}\mathbb{E}_{\bQ}[\gamma_{n}{Y}_{n}^{-}]<\infty.
\end{equation} 
Also, note that
\[
\mathbb{E}_{\bQ}[-\ln^{-}(G_T)]\leq \mathbb{E}[U^*(d\bQ/d\bP)]-\mathbb{E}[U(-\ln^{-}(G_T))],
\]
where the right-hand side is finite by our assumptions.
Thus, it follows that
\[
\sup_{n\in\bN}\mathbb{E}_{\bQ}[\ln^{-}(V_T(\phi_{n}))]<\infty.
\]
Now, since $t\to \int_{0}^{t}{\phi}_{u}(n)\, dS_{u}$
is a stochastic integral process that is bounded from below, it is a $\bQ$-supermartingale. Hence, 
the process $t\to \ln\left(\int_{0}^{t}{\phi}_{u}(n)\, dS_{u}\right)$ is also a $\bQ$-supermartingale which in turn implies
\[
\sup_{n\in\bN}\mathbb{E}_{\bQ}\left[\ln^{+}\left(1+\int_{0}^{t}{\phi}_{u}(n)\, dS_{u}\right)\right]<\infty.{}
\] 
Consequently, it follows that the
sequence $\int_{0}^{T}{\phi}_n(u)\, dS_{u}$, $n\in\mathbb{N}$, is bounded in $\bQ$-probability. A straightforward implication of Lemma 9.8.1 from \cite{DelSch2006} shows that 
there are convex combinations $\tilde{\phi}_{n}:=\sum_{j=n}^{m_{n}}c_{j}(n){\phi}_{j}$, where $c_{j}(n)\geq 0$ and $\sum_{j=n}^{m(n)}c_{j}=1$,
such that
\begin{equation}\label{eq:proof:1}
\int_{0}^{T}\tilde{\phi}_{n}(u)\, dS_{u}\to J^{*},\quad \textrm{$\bQ$-almost surely},
\end{equation}
and hence also $\bP$-almost surely, where $J^{*}$ is a (finite) random variable. Note that each stochastic integral in \eqref{eq:proof:1} is bounded from below by $-1$ and hence, by Theorem 9.4.2 in \cite{DelSch2006}, we know that there exists $\phi^{*}$, such that
\[
J^{*}\leq \int_{0}^{T}{\phi}^{*}(u)\, dS_{u}.
\]
It should be emphasized that in the statement of
Theorem 9.4.2 in \cite{DelSch2006} the process $(S_t)$ is assumed to be bounded but this is actually not required for its validity, see the bottom of page 294 in \cite{DelSch2006} for details. 

Not that at this point it is not yet clear whether 
\begin{equation}\label{moa}
\int_{0}^{T}{\phi}^{*}(u)\, dS_{u}>-1 
\end{equation}
almost
surely. Next, from concavity of $U$ and the logarithm function and from \eqref{chanel1}, we get
\[
\mathbb{E}\left[U\left(\gamma_{n}\left[\ln\left(\int_{0}^{T}\tilde{\phi}_{n}(u)\, dS_{u}\right)-\ln(G_T)\right]\right)\right]\geq U(0)
\]
for all $n$.
Now let us first consider two cases: (a) $\mathcal{S}<\infty$; (b) $\mathcal{S}=\infty$.
\medskip

\noindent {\it Case a)} $\mathcal{S}<\infty$. Fatou's lemma yields
\[
\mathbb{E}\left[U\left(\mathcal{S}\left[\ln\left(\int_{0}^{T}{\phi}^{*}(u)\, dS_{u}\right)-\ln(G_T)\right]\right)\right]\geq 
\mathbb{E}\left[U(\mathcal{S}(\ln(J^{*})-\ln(G_T))\right]\geq U(0),
\]
which also guarantees $\phi^*\in \Phi$, that is,
\eqref{moa}.
Thus, by the definition of $\alpha$, we get $\mathcal{S}\leq \alpha(\ln(V_T(\phi^{*}))-\ln(G_T)))$. Since the reverse equality comes directly from the definition of $\mathcal{S}$, we arrive at \eqref{problem}.

\medskip

\noindent {\it Case b)} $\mathcal{S}=\infty$. Noting
 $U\leq 0$, we know that for any $\varepsilon>0$ 
\begin{equation}\label{eq:th1:2}
\mathbb{E}[U(\gamma_{n}(\ln(V_T(\phi^{*}))-\ln(G_T)))]\leq U(-\gamma_{n}\varepsilon)\mathbb{P}[\ln(V_T(\phi^{*}))-\ln(G_T)\leq -\varepsilon]+U(\infty).
\end{equation}
If the probability on the right-hand side of \eqref{eq:th1:2} is positive for any $\varepsilon>0$, the left-hand side must converge to $-\infty$, as $n\to\infty$. This would lead to contradiction since the left-hand side is bounded from below by $U(0)$. Consequently, we must have $\ln(V_T(\phi^{*}))-\ln(G_T)\geq 0$ almost surely. This implies $\alpha(\ln(V_T(\phi^{*}))-\ln(G_T)))=\infty$, and concludes the proof of  \eqref{problem}. 

Finally, if $V_T(\phi^{\dagger})\geq G$ for some $\phi^{\dagger}\in\Phi$, then clearly 
$\mu_{\gamma}((\ln(V_T(\phi^{\dagger}))-\ln(G_T)))\leq 0$ for all $\gamma>0$, and consequently  
$\mathcal{S}=\alpha(\ln(V_T(\phi^{\dagger}))-\ln(G_T))=\infty$. All the statements are now shown.
\end{proof}

\begin{remark}[{Portfolio performance maximisation for non-utility based acceptability indices}] In the proof of Theorem \ref{maxim}, we exploited the connection with utility maximization problems and relied on the techniques developed for them, see \cite{Ras2018}. Consequently, those proof techniques might not transfer directly if one is interested in a more general (monotone) families of performance measures $(\mu_{\gamma})$ that constitute performance measures.
\end{remark}

\subsection{Infinite-time horizon portfolio performance measurement}\label{S:long-run2} 
In this section we assume $\bT=[0,\infty)$  and consider a benchmark process given by $G_t:=e^{\lambda t}$, $t\in\bT$, for a fixed $\lambda\in \bR$. In this setting, given growth rate $\lambda\in \bR$, we are interested in measuring the long-run performance of a portfolio by setting
\begin{equation}\label{eq:long.run.perf}
J_{\infty}(\phi,\lambda):=\liminf_{T\to\infty} \alpha\left(\ln V_T(\phi)-\lambda T\right).
\end{equation}
Recalling the $\alpha$ is given as an index dual to the family of certainty equivalents via risk-aversion parametrisation, the value of \eqref{eq:long.run.perf} might be interpreted as the level of risk aversion under which the investor is indifferent between investment in the risky position $(V_t(\phi))_{t\in\bT}$ and risk-free investment with growth rate $\lambda$. In particular, we can recover investor's risk aversions from the market by comparing their investments to (benchmark) risk-free investments. {In the following examples we show that the long-run performance is inherently linked to the interaction between the process dynamics, benchmark dynamics, and the choice of the underlying utility. This shows that one should be careful when introducing, measuring, or even statistically extracting the risk aversion. To ease the notation, given a pre-fixed strategy $\phi\in\Phi$ and $\lambda\in\bR$, and a dyadic time-grid on which we observe the values, say for $t\in\bN$, we denote the corresponding sequence of one-step portfolio log-returns and cumulative benchmarked growth by $r_t :=\ln V_t(\phi)/V_{t-1}(\phi)$ and $S_t:=\ln V_t(\phi)-\lambda t=(r_1-\lambda)+\ldots+(r_t-\lambda)$, respectively, for $t\in\bN$.}

\begin{example}[Recovering standard risk-aversion under i.i.d. setting]\label{ex:ent.iid}
Let us assume that we are given the exponential utility $U$ defined in \eqref{eq:U.exponential}. For simplicity, let us assume that $\phi$ is such that $(r_t)_{t\in\bT}$ is an i.i.d. sequence. For exponential utility, the corresponding certainty equivalent is additive for independent random variables, so that we get
\begin{align}
\alpha(\ln V_T(\phi)-\lambda T) & =\inf\left\{\gamma\in \bR\colon \sum_{t=0}^{T-1}\mu_{\gamma}\left(\ln \frac{V_{t+1}(\phi)}{V_t(\phi)}-\lambda\right)\geq 0\right\}\nonumber\\
&=\inf\left\{\gamma\in \bR\colon \mu_{\gamma}\left(r_1-\lambda\right)\geq 0\right\}.\label{eq:enri}
\end{align}
For instance, if $r_t\sim N(m,\sigma)$, where
$m>\lambda$ and $\sigma>0$,
then
\[
\alpha(\ln V_T(\phi)-\lambda T)=2(m-\lambda)/\sigma^2,
\]
as $\mu_\gamma (r_1-\lambda)=-(m-\lambda)+\frac{\gamma}{2}\sigma^2$; note that this is the value of the standard risk-aversion coefficient in the mean-variance optimisation.  In general, we can use log-returns as a sample to estimate entropic risk for any $\gamma>0$ and use \eqref{eq:enri} to recover the corresponding risk-aversion. Unfortunately, for non-exponential utility we do not have cash-additivity property, which makes the risk aversion extraction problem more challenging, both theoretically and computationally. 
\end{example}


\begin{example}[Long-run performance for non-exponential utility]\label{rem:unbu} 
The choice of a non-exponential utility in a (near) i.i.d. setting may lead to a situation when we always get $J_{\infty}(\cdot,\lambda)=\infty$, i.e. the acceptability index could be somewhat too lenient on a long time horizon. To show this, let us consider the modified exponential utility $U:\bR\to \bR_{-}$ given by
\[
U(x):=
\begin{cases}
-e^{-x} & \mbox{ for }x\geq 0\\
x-1 &\mbox{ for }x<0
\end{cases}.
\]
Let $\phi$ be such that $(r_t)_{t\in\bT}$ is an i.i.d. sequence, and $r_t\sim N(m,1)$ for $m>\lambda$. In this setting we always get $J_{\infty}(\phi,\lambda)=\infty$. It suffices to show that for each $\gamma>0$ we get $\mu_{\gamma}(S_T)\to \infty$ as $T\to\infty$, which in turn is guaranteed by $\mathbb{E}[U(\gamma S_T)]\to 0$. Since $S_T\sim N((m-\lambda)T,\sqrt{T})$, we know that $S_T\to\infty$ in probability, as $T\to\infty$. Consequently, we get
\begin{equation}\label{eq:ex.last.1}
\bE\left[U(\gamma S_T)\right]= \bE\left[U(\gamma S_T)1_{\{S_T>0\}}\right]- \int_{-\infty}^0 U(\gamma x)e^{-(x-(m-\lambda)T)^2/2T}\, dx.
\end{equation}
In \eqref{eq:ex.last.1}, the first term tends to $0$ by the dominated convergence theorem. The second term is equal to  $\int_{0}^{\infty} (\gamma x +1) e^{-(x+(m-\lambda)T)^2/2T}\, dx$. The function inside the integral is dominated by $(\gamma x+\gamma)e^{-x(m-\lambda)}e^{-(m-\lambda)^2 T /2}$, which implies that the second term in \eqref{eq:ex.last.1} is bounded from above by some constant multiplied by $\gamma e^{-(m-\lambda)^2 T /2}$. This proves our claim, i.e. we get $\mathbb{E}[U(\gamma S_T)]\to 0$ as $T\to\infty$. Note that this argumentation in in fact valid for any utility $U$ for which left limit tends to $-\infty$ at a high power rate.
\end{example}

\begin{example}[Process memory and its impact on the long-run performance] 
Let us assume that we are given the exponential utility. Let us assume that $\phi$ is such that $(r_t)_{t\in\bN}$ is a stationary Gaussian process with mean $m-\lambda>0$.
Let us denote the mean of $S_T$ by $\mu_T:=(m-\lambda)T$ and its variance by $\sigma_T^2:=\mathrm{Var}(S_T)$.
The quantity $\sigma_T^2/T$ typically has a limit,
see for instance Theorem 18.2.2 and Theorem 18.5.2
in \cite{IbrLin1971}. Assume that $\sigma_T^2/T\to \sigma^2\in [0,\infty]$, as $T\to\infty$. Now, as $S_T$ is Gaussian for $T\in\bN$, we get $\mu_{\gamma}(S_T)=-\mu_T+
\frac{\gamma}{2}\sigma_T^2$ and consequently
\[
\alpha(S_T)=2(m-\lambda)T/\sigma_T^2.
\]
Let us now consider three cases: (1) $\sigma=0$; (2) $\sigma=\infty$; (3) $\sigma\in (0,\infty)$. In the first case, we are in the sub-diffusive regime (since
$\sigma_T$ grows in a sublinear way) and obtain $J_\infty(\phi,\lambda)=\infty$. This is the case for   processes with negative memory, e.g. when $r_t$ is a fractional noise with Hurst parameter $H<1/2$, see e.g. Chapter 1 of \cite{GirKouSur2012}. In the second case, we are in the super-diffusive regime and get $J_\infty(\phi,\lambda)=0$. This corresponds to long memory processes, e.g. a fractional noise with $H>1/2$, see \cite{GirKouSur2012}. Finally, in the third case, we are in the intermediate (diffusive) regime where the
variance of $S_T$ scales linearly in $T$ thus
we obtain a finite, positive $J_{\infty}(\phi,\lambda)$. This is the case with e.g. standard ARMA processes. 
We can clearly see how performance depends on the process memory in this example.
\end{example}

\begin{example}[Long-run performance for a non-logarithmic growth]
The previous two examples show that in certain cases benchmarking with a linear (logarithmic) growth might lead to a degenerate performance. To overcome this problem one can consider a more generic problem with a deterministic benchmark $(G_T)_{T\in\bT}$ given by $G_T:=g(T)$ for some increasing function $g:\bT\to \bR_{+}$ and also consider assessing portfolio value rather than its growth. The objective value of a corresponding long-run portfolio optimization problem would be then given by
\begin{equation}\label{eq:Kg.one}
K(g):=\liminf_{T\to\infty}\sup_{\phi\in\mathcal{A}_T}\alpha\left(V_T(\phi)-g(T)\right),
\end{equation}
where $\mathcal{A}_T$ denotes the family of strategies available for trading up to time $T\in\bT$. This formulation can lead to radically different conclusions. For instance, following the futures market dynamics introduced \cite{GuaNagRas2021}, let us assume that the underlying asset is an Ornstein-Uhlenbeck process, exhibiting a linear mean-reversion, and we can construct a strategy $\phi_T\in\mathcal{A}_T$ for
which $\mu_{\gamma}( V_T(\phi))=cT^2/\gamma$ with some $c>0$. This means, in particular, that for $g_1(T):=\lambda T$, where $\lambda>0$, we always get $K(g_1)=\infty$. In order to have $K(g)<\infty$, one needs at least quadratic $g$. In other words, in such markets, the performance of a well-chosen portfolios can counterbalance a super-linearly growing deterministic return. If the underlying asset exhibits a non-linear mean-reversion, $g$ must be growing as a power higher than $2$ to get finite $K(g)$, see Theorem 2.2 in \cite{GuaNagRas2021}.
\end{example}

\begin{example}[Links to risk-sensitive stochastic control]
For exponential utility, the objective function defined in \eqref{eq:long.run.perf} could be seen as a a map that is dual to the {\it time averaged long-run risk sensitive criterion} objective function given by
\begin{equation}\label{eq:limit.growth}
J^*_{\infty}(\phi,\gamma):=\liminf_{T\to\infty}\frac{-\mu_{\gamma}\left(\ln V_T(\phi)\right)}{T}.
\end{equation}
In \eqref{eq:limit.growth}, given risk aversion $\gamma>0$, we look for an optimal growth rate $\lambda>0$ encoded in strategy that maximises \eqref{eq:limit.growth}, see e.g.~\cite{BieCiaPit2013,BiePli2003} for details. Indeed, since for exponential utility the corresponding certainty equivalent is translation invariant, we get the 
heuristics
\begin{align*}
J_{\infty}(\phi,\lambda) & = \liminf_{T\to\infty}\left[\sup\left\{\gamma>0\colon \mu_{\gamma}\left(\ln V_T(\phi)-\lambda T\right)\leq 0 \right\}\right]\\
& =\liminf_{T\to\infty}\left[\sup\left\{\gamma>0\colon \frac{-\mu_{\gamma}\left(\ln V_T(\phi)\right)}{T}\geq \lambda \right\}\right]\\
&\approx\sup\left\{\gamma>0\colon \liminf_{T\to\infty}\frac{-\mu_{\gamma}\left(\ln V_T(\phi)\right)}{T}\geq \lambda \right\}\\
& =\sup\left\{\gamma>0\colon J^*_{\infty}(\phi,\gamma)\geq \lambda \right\}.
\end{align*}
In particular, while in the classic setting we are looking for a strategy with optimal growth, here we ask ourselves how risk averse we can be to superhedge a prescribed growth. In a stochastic control framework, this might correspond to considering a set of Bellman inequalities instead of a single Bellman equation, see \cite{BiePli1999,Cav2010,BisBor2023}.
\end{example}




\bibliographystyle{siamplain}
\bibliography{bibliography}

\begin{thebibliography}{10}

\bibitem{BenTeb1986}
{\sc A.~Ben-Tal and M.~Teboulle}, {\em Expected utility, penalty functions, and
  duality in stochastic nonlinear programming}, Management Science, 32 (1986),
  pp.~1445--1466.

\bibitem{BenTeb2007}
{\sc A.~Ben-Tal and M.~Teboulle}, {\em An old-new concept of convex risk
  measures: The optimized certainty equivalent}, Mathematical Finance, 17
  (2007), pp.~449--476.

\bibitem{BieCiaIyiRod2013}
{\sc T.~R. Bielecki, I.~Cialenco, I.~Iyigunler, and R.~Rodriguez}, {\em Dynamic
  conic finance: Pricing and hedging via dynamic coherent acceptability indices
  with transaction costs}, International Journal of Theoretical and Applied
  Finance, 16 (2013), p.~1350002.

\bibitem{BieCiaPit2013}
{\sc T.~R. Bielecki, I.~Cialenco, and M.~Pitera}, {\em Dynamic limit growth
  indices in discrete time}, Stochastic Models, 31 (2015), pp.~494--523.

\bibitem{BieCiaZha2012}
{\sc T.~R. Bielecki, I.~Cialenco, and Z.~Zhang}, {\em Dynamic coherent
  acceptability indices and their applications to finance}, Mathematical
  Finance, 24 (2014), pp.~411--441.

\bibitem{BiePli1999}
{\sc T.~R. Bielecki and S.~R. Pliska}, {\em Risk-sensitive dynamic asset
  management}, Applied Mathematics \& Optimization, 39 (1999), pp.~337--360.

\bibitem{BiePli2003}
{\sc T.~R. Bielecki and S.~R. Pliska}, {\em Economic properties of the risk
  sensitive criterion for portfolio management}, Review of Accounting and
  Finance, 2 (2003), pp.~3--17.

\bibitem{BisBor2023}
{\sc A.~Biswas and V.~S. Borkar}, {\em Ergodic risk-sensitive control—a
  survey}, Annual Reviews in Control, 55 (2023), pp.~118--141.

\bibitem{Cav2010}
{\sc R.~Cavazos-Cadena}, {\em Optimality equations and inequalities in a class
  of risk-sensitive average cost markov decision chains}, Mathematical Methods
  of Operations Research, 71 (2010), pp.~47--84.

\bibitem{CavHer2016}
{\sc R.~Cavazos-Cadena and D.~Hern{\'a}ndez-Hern{\'a}ndez}, {\em A
  characterization of the optimal certainty equivalent of the average cost via
  the arrow-pratt sensitivity function}, Mathematics of Operations Research, 41
  (2016), pp.~224--235.

\bibitem{CheKro2013}
{\sc P.~Cheridito and E.~Kromer}, {\em Reward-risk ratios}, Journal of
  Investment Strategies, 3 (2013), pp.~3--18.

\bibitem{CheMad2009}
{\sc A.~S. Cherny and D.~B. Madan}, {\em New measures for performance
  evaluation}, The Review of Financial Studies, 22 (2009), pp.~2571--2606.

\bibitem{DelSch2006}
{\sc F.~Delbaen and W.~Schachermayer}, {\em The mathematics of arbitrage},
  Springer Science \& Business Media, 2006.

\bibitem{EbeMad2014}
{\sc E.~Eberlein and D.~B. Madan}, {\em Maximally acceptable portfolios}, in
  Inspired by Finance, Y.~Kabanov, M.~Rutkowski, and T.~Zariphopoulou, eds.,
  Springer, 2014, pp.~257--272.

\bibitem{FolSch2004}
{\sc H.~F{\"o}llmer and A.~Schied}, {\em Stochastic finance. An introduction in
  discrete time}, vol.~27 of de Gruyter Studies in Mathematics, Walter de
  Gruyter \& Co., Berlin, extended~ed., 2004.

\bibitem{Ger1979}
{\sc H.~U. Gerber}, {\em An introduction to mathematical risk theory}, vol.~8,
  SS Huebner Foundation for Insurance Education, Wharton School, University of
  Pennsylvania Philadelphia, 1979.

\bibitem{GirKouSur2012}
{\sc L.~Giraitis, H.~L. Koul, and D.~Surgailis}, {\em Large Sample Inference
  for Long Memory Processes}, Imperial College Press, 2012.

\bibitem{GuaNagRas2021}
{\sc P.~Guasoni, L.~Nagy, and M.~R{\'a}sonyi}, {\em Young, timid, and risk
  takers}, Mathematical Finance, 31 (2021), pp.~1332--1356.

\bibitem{IbrLin1971}
{\sc I.~A. Ibragimov and Y.~V. Linnik}, {\em Independent and stationary
  sequences of random variables}, Wolters-Noordhof Publishing, 1971.

\bibitem{KalErtAkb2019}
{\sc C.~B. Kalayci, O.~Ertenlice, and M.~A. Akbay}, {\em A comprehensive review
  of deterministic models and applications for mean-variance portfolio
  optimization}, Expert Systems with Applications, 125 (2019), pp.~345--368.

\bibitem{KolTutFab2014}
{\sc P.~N. Kolm, R.~T{\"u}t{\"u}nc{\"u}, and F.~J. Fabozzi}, {\em 60 years of
  portfolio optimization: Practical challenges and current trends}, European
  Journal of Operational Research, 234 (2014), pp.~356--371.

\bibitem{KouRos2020}
{\sc C.~E. Kountzakis and D.~Rossello}, {\em Acceptability indices of
  performance for bounded c{\`a}dl{\`a}g processes}, Stochastics, 92 (2020),
  pp.~1043--1063.

\bibitem{KovRudCia2022}
{\sc G.~Kováčová, B.~Rudloff, and I.~Cialenco}, {\em Acceptability
  maximization}, Frontiers of Mathematical Finance, 1 (2022), pp.~219--248.

\bibitem{KupSch2009}
{\sc M.~Kupper and W.~Schachermayer}, {\em Representation results for law
  invariant time consistent functions}, Mathematics and Financial Economics, 2
  (2009), pp.~189--210.

\bibitem{MacThoZie2011}
{\sc L.~C. MacLean, E.~O. Thorp, and W.~T. Ziemba}, {\em The Kelly capital
  growth investment criterion: Theory and practice}, {W}orld {S}cientific,
  2011.

\bibitem{CheMad2010}
{\sc D.~B. Madan and A.~Cherny}, {\em Markets as a counterparty: an
  introduction to conic finance}, International Journal of Theoretical and
  Applied Finance, 13 (2010), pp.~1149--1177.

\bibitem{MadSch2017}
{\sc D.~B. Madan and W.~Schoutens}, {\em Conic option pricing}, The Journal of
  Derivatives, 25 (2017), pp.~10--36.

\bibitem{MolPit2019}
{\sc F.~Moldenhauer and M.~Pitera}, {\em Backtesting expected shortfall: a
  simple recipe?}, Journal of Risk, 22 (2019), pp.~17--42.

\bibitem{Mul2007}
{\sc A.~M{\"u}ller}, {\em Certainty equivalents as risk measures}, Brazilian
  Journal of Probability and Statistics, 21 (2007), pp.~1--12.

\bibitem{OrtBigStoRachFab2005}
{\sc S.~Ortobelli, A.~Biglova, S.~Stoyanov, S.~Rachev, and F.~Fabozzi}, {\em A
  comparison among performance measures in portfolio theory}, IFAC Proceedings
  Volumes, 38 (2005), pp.~1--5.

\bibitem{PitSch2022}
{\sc M.~Pitera and T.~Schmidt}, {\em Estimating and backtesting risk under
  heavy tails}, Insurance: Mathematics and Economics, 104 (2022), pp.~1--14.

\bibitem{PitSte2016}
{\sc M.~Pitera and {\L}.~Stettner}, {\em Long run risk sensitive portfolio with
  general factors}, Mathematical Methods of Operations Research, 83 (2016),
  pp.~265--293.

\bibitem{PitSte2023}
{\sc M.~Pitera and {\L}.~Stettner}, {\em Discrete-time risk sensitive portfolio
  optimization with proportional transaction costs}, Mathematical Finance, 33
  (2023), pp.~1287--1313.

\bibitem{PraDro2013}
{\sc M.~Pratsiovytyi and V.~Drozdenko}, {\em Characterization theorems for mean
  value insurance premium calculation principle}, Tbilisi Mathematical Journal,
  6 (2013), pp.~57--71.

\bibitem{Pri2007}
{\sc J.-L. Prigent}, {\em Portfolio optimization and performance analysis}, CRC
  Press, 2007.

\bibitem{Ras2018}
{\sc M.~R{\'a}sonyi}, {\em On utility maximization without passing by the dual
  problem}, Stochastics, 90 (2018), pp.~955--971.

\bibitem{Righi2022}
{\sc M.~B. Righi}, {\em Star-shaped acceptability indexes}.
\newblock Preprint, 2022, \url{https://arxiv.org/abs/2110.08630}.

\bibitem{RosGiaSga2013}
{\sc E.~Rosazza~Gianin and C.~Sgarra}, {\em Acceptability indexes via
  ’g-expectations’: An application to liquidity risk.}, Mathematics and
  Financial Economics, 7 (2013), pp.~457--475.

\bibitem{Sch2001}
{\sc W.~Schachermayer}, {\em Optimal investment in incomplete markets when
  wealth may become negative}, Annals of Applied Probability,  (2001),
  pp.~694--734.

\bibitem{Sha1964}
{\sc W.~F. Sharpe}, {\em Capital asset prices: A theory of market equilibrium
  under conditions of risk}, The {J}ournal of {F}inance, 19 (1964),
  pp.~425--442.

\bibitem{Ste2023}
{\sc {\L}.~Stettner}, {\em Certainty equivalent control of discrete time markov
  processes with the average reward functional}, Systems \& {C}ontrol
  {L}etters, 181 (2023), p.~105627.

\bibitem{Whi1990}
{\sc P.~Whittle}, {\em Risk-sensitive optimal control}, Wiley New York, 1990.

\end{thebibliography}
\end{document}